\documentclass{article}
\usepackage[noend, linesnumbered, noline, ruled]{algorithm2e}
\usepackage{tikz}
\usepackage{paralist, fullpage}
\usepackage{amsmath,amsthm}
\usepackage{amssymb}
\usepackage{authblk}
\usepackage{array}
\usepackage[utf8]{inputenc}
\usepackage[T1]{fontenc}
\usepackage{enumerate}
\usepackage{verbatim, xspace}
\usepackage{xcolor, colortbl}

\usepackage[hidelinks]{hyperref}
\usepackage{cleveref}

\usetikzlibrary{decorations.pathreplacing,decorations.pathmorphing}

\pagestyle{plain}

\ifdefined\DEBUG

  \newcommand{\fab}[1]{\textcolor{red}{#1}}
  \newcommand{\tom}[1]{\textcolor{blue}{#1}}
  
  \def\rem#1{{\marginpar{\raggedright\scriptsize #1}}}
  \newcommand{\fabr}[1]{\rem{\textcolor{red}{$\bullet$ #1}}}
  \newcommand{\tomr}[1]{\rem{\textcolor{blue}{$\bullet$ #1}}}
  \newcommand{\micr}[1]{\rem{\textcolor{green}{$\bullet$ #1}}}
\else

  \newcommand{\fab}[1]{#1}
  \newcommand{\tom}[1]{#1}
    
  \newcommand{\fabr}[1]{}
  \newcommand{\tomr}[1]{}
  \newcommand{\micr}[1]{}

\fi

\providecommand{\eps}{\varepsilon}

\date{}

\newcommand{\preal}{\mathbb{R}_{\ge 0}}
\newcommand{\sm}{\setminus}
\newcommand{\E}{\mathbb{E}}

\newtheorem{theorem}{Theorem}
\newtheorem{lemma}{Lemma}

\makeatletter
\newtheorem*{rep@theorem}{\rep@title}
\newcommand{\newreptheorem}[2]{%
\newenvironment{rep#1}[1]{%
\def\rep@title{#2 \ref{##1}}%
\begin{rep@theorem}}%
{\end{rep@theorem}}
}
\makeatother

\newreptheorem{theorem}{Theorem}
\newreptheorem{lemma}{Lemma}
\newreptheorem{corollary}{Corollary}
\newreptheorem{fact}{Fact}

\newcommand{\RMAS}{\textsc{Restricted Maximum Acyclic Subgraph}\xspace}

\newcommand{\MAS}{\textsc{Maximum Acyclic Subgraph}\xspace}

\newcommand{\VP}{\textsc{Vertex Pricing}\xspace}

\title{An LP-Rounding $2\sqrt{2}$ Approximation for\\ Restricted Maximum Acyclic Subgraph%
\thanks{Partially supported by the ERC Starting Grant NEWNET 279352 and by Foundation for Polish Science
grant HOMING PLUS/2012-6/2.}}

\author[1]{Fabrizio Grandoni}
\author[2]{Tomasz Kociumaka}
\author[2]{Michał Włodarczyk}
 
 \affil[1]{IDSIA, University of Lugano, Switzerland}
 \affil[2]{Faculty of Mathematics, Informatics and Mechanics, University of Warsaw, Poland}

\begin{document}
\maketitle

\begin{abstract}
\noindent In the classical \MAS problem (MAS), given a directed-edge weighted graph, we are required to find an ordering of the nodes that maximizes the total weight of forward-directed edges. 
MAS admits a $2$-approximation, and this approximation is optimal under the Unique Game Conjecture.%
\tomr{We should be consistent choosing `$r$-approximation' or `$r$~approximation'}

In this paper we consider a generalization of MAS, the \RMAS problem (RMAS), where each node is associated with a list of integer labels, and we have to find a labeling of the nodes so as to maximize the weight of edges whose head label is larger than the tail label. The best known (almost trivial) approximation for RMAS is $4$.

The interest of RMAS is mostly due to its connections with the \VP problem (VP). In VP we are given an undirected graph with positive edge budgets. A feasible solution consists of an assignment of non-negative prices to the nodes. The profit for each edge $e$ is the sum of its endpoints prices if that sum is at most the budget of $e$, and zero otherwise. Our goal is to maximize the total profit. The best known approximation for VP, which works analogously to the mentioned approximation algorithm for RMAS, is $4$. Improving on that is a challenging open problem. On the other hand, the best known $2$ \fab{inapproximability result} is due to a reduction from a special case of RMAS.

In this paper we present an improved LP-rounding $2\sqrt{2}$ approximation for RMAS. Our result shows that, in order to prove a $4$ hardness of approximation result for VP (if possible), one should consider reductions from harder problems. Alternatively, our approach might suggest a different way to design approximation algorithms for VP.

\end{abstract}

\section{Introduction}

In the classical \textsc{Maximum Acyclic Subgraph} problem (MAS) we are given a directed graph
$G = (V, E)$, with edge weights $\{w_e\}_{e\in E}$,
and we look for an ordering of the nodes so as to maximize the total weight of forward-oriented edges. MAS admits a $2$-approximation, which is optimal under the Unique Games Conjecture (UGC) \cite{mas_hardness}.

In this paper we consider the following generalization of MAS. In the \RMAS problem (RMAS) we are given the same input as for MAS, plus a set $L_v$ of integer labels for each node $v$\footnote{We assume that the lists $L_v$ are given explicitly.}. 
Our goal is to find a labeling $\{\ell(v)\}_{v\in V}$ of the nodes, $\ell(v)\in L_v$, that maximizes the weight of edges going from a lower label to a higher one. 
In other words, the objective function is 
$$
\sum_{\fab{e=}(u,v)\in E \atop \ell(u) < \ell(v)} w_{\fab{e}}
$$
%
%
Note that it is allowed to assign the same label to multiple nodes. MAS is the special case of RMAS where $L_v=\{1,\ldots,|V|\}$ for all nodes $v$.\fabr{Maybe we should define $n=|V|$}
Clearly, the $2$-hardness of approximation for MAS extends to RMAS. The best known approximation ratio for RMAS is $4$, which is achieved with an almost trivial algorithm.

In this paper we investigate the approximability of RMAS, and  we present an improved $2\sqrt{2}$ approximation for the problem. Our result combines the trivial $4$ approximation algorithm with a novel LP-rounding algorithm.

\subsection{Related Work}

\fabr{Something more about MAS. How that the 2-apx work? Why does not generalize to RMAS? \tom{The intuitive argument is that
for MAS the optimum is always at least $\frac{W}{2}$.}}
Our interest in RMAS is motivated by the following \VP problem (VP): we are given an undirected (multi-) graph $G=(V,E)$, with positive edge budgets $\{B_e\}_{e\in E}$. Our goal is to assign a non-negative price $p(v)$ to each node $v$ so as to maximize the sum $p(u)+p(v)$ over the edges $e=\{u,v\}$ such that $p(u)+p(v)\leq B_e$. 
Khandekar et al. \cite{khandekar} proved that VP is $2$-hard to approximate (under UGC) via a reduction from a special case of RMAS. \fab{Their reduction exploits instances of RMAS where} labels are non-negative, each $L_v$ contains $0$, and $L_u\cap L_v=\{0\}$ for any distinct $u,v\in V$.
\fab{Note that} such instances \fab{still} generalize MAS.
\fabr{I meant more details about the reduction to VP. Why do you want to stress the special case of RMAS? The reductions are always special in some sense}

VP is APX-hard even on bipartite graphs \cite{ERRS09}. The best known approximation for VP is $4$ \cite{BB06ec}, and improving on that (if possible) is a well-known challenging open problem. Interestingly enough, the mentioned approximation is obtained with an algorithm analogous to the best-known $4$ approximation for RMAS. So it is natural to wonder whether RMAS and VP are equally hard to approximate. Our result suggests that RMAS might actually be an easier problem than VP. Alternatively, it might suggest a way to design improved approximation algorithms for VP (though generalizing our approach to VP does not seem easy). 

VP belongs to a broader family of \emph{pricing} problems, which recently attracted a lot of attention. In particular, in the (single-minded unlimited-supply) {\sc Item Pricing} problem (IT), we are given a (multi-) hyper-graph $G=(V,E)$ with hyper-edge positive budgets $\{B_e\}_{e\in E}$. We have to assign a non-negative price $p(v)$ to each node, so as to maximize the objective function $\sum_{e\in E: p(e)\leq B_e}p(e)$, where $p(e)=\sum_{v\in e}p(v)$. This problem admits a $O(\log m + \log n)$ approximation, where $n$ is the number of nodes and $m$ the number of hyper-edges \cite{GHKKKM05soda} (see also \cite{BK06soda} for a refinement of this result). On the negative side, Demaine et al. \cite{DFHS08} show that this problem is hard to approximate within $\log^\delta n$, for some constant $\delta>0$, assuming that $NP\not\subseteq BPTIME(2^{n^{\eps}})$ for some $\eps>0$.
Better approximation algorithms are known for the special case where the maximum size $k$ of any hyper-edge is small. 
In particular, an $O(k)$ approximation is given in \cite{BB06ec}. As recently shown \cite{CLN13}, the latter result is (asymptotically) essentially the best possible under the Exponential Time Hypothesis.

VP is the special case of IT where all hyper-edges have size precisely $2$. Another well-studied special case of IT is the {\sc Highway} problem. Here one is given a path $P$ on the node-set $V$, and hyper-edges are forced to induce subpaths of $P$.
This problem was shown to be weakly $NP$-hard by Briest and Krysta \cite{BK06soda}, and strongly $NP$-hard by Elbassioni, Raman, Ray, and Sitters \cite{ERRS09}. Balcan and Blum \cite{BB06ec} give an $O(\log n)$ approximation for the problem. Gamzu and Segev~\cite{GS10} improved the approximation factor to $O(\log n/\log\log n)$. Elbassioni, Sitters, and Zhang \cite{ESZ07esa} developed a QPTAS, exploiting the profiling technique introduced by Bansal et al.~\cite{BCES06stoc}. Finally, a PTAS was given by Grandoni and Rothvo\ss~\cite{GR11}.

The {\sc Tree Tollbooth} problem is a generalization of the {\sc Highway} problem where we are given a tree $T$ on the node-set $V$, and hyper-edges are forced to induce paths in $T$.     
An $O(\log n)$ approximation was developed in \cite{ERRS09}, which was slightly improved to $O(\log n/\log \log n)$ by Gamzu and Segev~\cite{GS10}. 
For the case of uniform budgets an $O(\log\log n)$ approximation was given by Cygan et al. \cite{CGLPS12}. {\sc Tree Tollbooth} is $APX$-hard \cite{GHKKKM05soda}.

\section{An Improved Approximation Algorithm for RMAS}
\label{sec:alg}

In this section we present our improved approximation algorithm for RMAS. 
In Section \ref{ssec:trivial} we revisit the folklore $4$-approximation for the problem, that is one of our building blocks. 
In Section \ref{ssec:lp} we present and analyze a novel LP-based algorithm. 
Finally, in Section \ref{ssec:conclusions} we discuss the derandomization of both algorithms, and conclude with our main result. 

In the following $W = \sum_{e\in E} w_e$ is the sum of all the weights. 
Observe that trivially $W$ is an upper bound on the profit of the optimum solution. 
For a given node $u\in V$, let $\ell_{u,\min}=\min\{\ell:\ell\in L_u\}$ and $\ell_{u,\max}=\max\{\ell:\ell\in L_u\}$. 
Without loss of generality, we can assume that, for any edge $e=(u,u')\in E$, one has $\ell_{u,\min}<\ell_{u',\max}$. Otherwise, $e$ can be filtered out without changing the value of the optimum solution.  

\subsection{A Simple Randomized Algorithm}
\label{ssec:trivial}

Consider the following simple algorithm for RMAS: \tom{i}ndependently for each $u\in V$, set $\ell(u)=\ell_{u,\min}$ with probability $\frac{1}{2}$, and $\ell(u)=\ell_{u,\max}$ otherwise. 
\begin{lemma}\label{lem:quarter}
The above algorithm computes a solution of expected profit at least $\frac{W}{4}$.
\end{lemma}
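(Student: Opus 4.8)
The plan is to apply linearity of expectation and reduce the whole statement to a per-edge probability bound. Write the random profit as $\sum_{e=(u,u')\in E} w_e\cdot \mathbf{1}[\ell(u)<\ell(u')]$, so that
$$
\E[\text{profit}] = \sum_{e=(u,u')\in E} w_e \cdot \Pr[\ell(u) < \ell(u')].
$$
It then suffices to show that every edge is oriented forward with probability at least $\tfrac14$; summing $w_e/4$ over all edges yields exactly $W/4$.

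To bound the per-edge probability, I would fix an edge $e=(u,u')$ and exploit the independence of the coin flips at $u$ and $u'$. Each of the two endpoints independently lands on its minimum or its maximum label, so the four joint outcomes each occur with probability $\tfrac14$. I would then single out the outcome $\ell(u)=\ell_{u,\min}$ and $\ell(u')=\ell_{u',\max}$.

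The key step is invoking the preprocessing assumption stated just before the algorithm: after filtering, every surviving edge $e=(u,u')$ satisfies $\ell_{u,\min} < \ell_{u',\max}$. Hence in the singled-out outcome we have $\ell(u)=\ell_{u,\min} < \ell_{u',\max} = \ell(u')$, so the edge is counted. This one favorable outcome alone gives
$$
\Pr[\ell(u) < \ell(u')] \ge \Pr\big[\ell(u)=\ell_{u,\min} \text{ and } \ell(u')=\ell_{u',\max}\big] = \tfrac14,
$$
and the lemma follows by the summation above.

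Honestly, there is no real obstacle here: the entire content is recognizing that the filtering step was precisely what guarantees the decisive ``min at the tail, max at the head'' configuration is always forward-oriented. The only points to state carefully are that independence is what makes each of the four configurations have probability $\tfrac14$, and that using just a single favorable configuration (rather than all forward-oriented ones) already suffices, so no case analysis of the remaining outcomes is needed.
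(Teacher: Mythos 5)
Your proof is correct and follows essentially the same route as the paper's: linearity of expectation plus the observation that the filtering guarantee $\ell_{u,\min}<\ell_{u',\max}$ makes the single configuration (tail at min, head at max) forward-oriented, giving the per-edge probability bound of $\tfrac14$. You have merely spelled out the independence and the sufficiency of one favorable outcome, which the paper leaves implicit.
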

\begin{proof}
Fix an edge $e=(u,u')\in E$. By the initial filtering, $\ell_{u,\min}<\ell_{u',\max}$. Therefore, with probability at least $\frac{1}{4}$ one has $\ell(u)<\ell(u')$ and we benefit from $e$. By linearity of expectation the total expected profit is at least $\sum_{e} \frac{w_e}{4}=\frac{W}{4}$.
\end{proof}

The above result provides a $4$ approximation for RMAS. This analysis of the approximation ratio turns out to be tight, \fab{by exploiting a result in \cite{alon_directed} (see also \cite{khandekar})}.
\begin{lemma}
The above algorithm has approximation factor at least $4$.
\end{lemma}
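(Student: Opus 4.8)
The plan is to prove tightness by exhibiting a family of instances on which the randomized algorithm of Lemma~\ref{lem:quarter} earns expected profit exactly $\frac{W}{4}$ while the optimum earns $W$. Since $W$ upper-bounds the optimum, the ratio $\mathrm{OPT}/\E[\mathrm{ALG}]$ then equals $4$, so the bound of Lemma~\ref{lem:quarter} cannot be improved. The cleanest route is to restrict to the MAS special case, where $L_v=\{1,\dots,n\}$ for every node (with $n=|V|$) and hence $\ell_{v,\min}=1$ and $\ell_{v,\max}=n$ are common to all nodes.

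First I would analyze the per-edge success probability in this special case. Because every node is assigned either $1$ or $n$, each with probability $\frac12$ and independently, for an edge $e=(u,u')$ the event $\ell(u)<\ell(u')$ occurs precisely when $\ell(u)=1$ and $\ell(u')=n$. Thus each edge is satisfied with probability exactly $\frac14$ (not merely at least $\frac14$), so by linearity of expectation $\E[\mathrm{ALG}]=\frac{W}{4}$ on every such instance. The feature making this exact is that the two extreme labels coincide across all nodes, so no edge receives any forward probability from the other three label combinations.

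Second I would choose the underlying directed graph to be acyclic, which forces $\mathrm{OPT}=W$. Any DAG works: labeling each node by its position in a topological order (a valid label, since all positions lie in $\{1,\dots,n\}$) makes every edge forward. A concrete dense family is the transitive tournament on $n$ nodes, which gives $\mathrm{OPT}=W$ together with $\E[\mathrm{ALG}]=\frac{W}{4}$, and hence ratio exactly $4$. Combined with Lemma~\ref{lem:quarter}, this shows the approximation factor of the algorithm is exactly $4$.

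I expect the main subtlety to be the \emph{simultaneous} tightness of the two bounds $\E[\mathrm{ALG}]\ge\frac{W}{4}$ and $\mathrm{OPT}\le W$: the first requires each edge to be satisfiable through a single label combination (guaranteed by the uniform $\{1,n\}$ structure), while the second requires all edges to be satisfiable at once (guaranteed by acyclicity). If one wanted a more robust witness---e.g. instances where even the best \emph{deterministic} two-label assignment cannot exceed $\frac{W}{4}$ while still $\mathrm{OPT}=W$---the transitive tournament no longer suffices, since splitting the nodes between labels $1$ and $n$ already attains $\frac{W}{2}$ there; for that strengthening one would invoke the directed-graph construction of \cite{alon_directed}. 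For the factor of the randomized algorithm itself, however, the DAG example above is already tight.
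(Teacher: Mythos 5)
Your proof is correct, but it takes a genuinely different route from the paper's. You restrict to the MAS special case on an arbitrary DAG (e.g.\ the transitive tournament) and compute the expectation exactly: since every node shares $\ell_{\min}=1$ and $\ell_{\max}=n$, each edge is forward with probability exactly $\tfrac14$, so $\E[Z]=\tfrac{W}{4}$ while acyclicity forces $opt=W$, giving ratio exactly $4$. The paper instead invokes the construction of Alon et al.\ \cite{alon_directed}: a family of acyclic digraphs whose \emph{maximum} directed cut has only $\tfrac{m}{4}+o(m)$ edges. Since the algorithm only ever uses the two extreme labels, every realization of its output is a directed cut, so on those instances the value is at most $\bigl(\tfrac14+o(1)\bigr)\,opt$ with probability one, not merely in expectation. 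Your argument is more elementary and fully suffices for the lemma as stated, under the standard convention that a randomized algorithm's approximation factor is measured by its expected value (the convention consistent with Lemma~\ref{lem:quarter}). The paper's heavier construction buys a stronger conclusion: no two-label assignment at all --- in particular the derandomized algorithm of Lemma~\ref{lem:quarter_der}, which is what the final theorem actually uses --- can beat factor $4$ on those instances, whereas on your transitive tournament the derandomized algorithm can reach $\tfrac{W}{2}$. You correctly identify this distinction in your closing paragraph and point to exactly the right reference for the stronger claim, so there is no gap; the two proofs simply certify tightness at different levels of robustness.
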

\begin{proof}
Alon et al. \cite{alon_directed} constructed a family of acyclic directed graphs
with maximal directed cut of size \hbox{$\frac{m}{4} + o(m)$}
where $m$ is the number of edges.
We use such a graph as an RMAS instance with $L_v = \{1, \dots, |V|\}$ for every vertex $v$ and unit weights.
As the graph is acyclic it admits a topological order.
Setting positions in this order as labels gives us the optimal solution equal to $m$.
However, the algorithm from Lemma \ref{lem:quarter} would only assign labels 1 and $|V|$, 
giving solutions of value not exceeding the size of the largest directed cut.
\end{proof}

%
%
%

\subsection{An LP-Rounding Algorithm}
\label{ssec:lp}

Observe that the algorithm from Section \ref{ssec:trivial} provides an approximation strictly better than $4$ whenever the optimum solution is strictly cheaper than $W$. In this subsection we present a better algorithm for the opposite case. 

For an instance $D$ of RMAS, consider the following LP-relaxation $LP(D)$ of $D$. Let $L = \bigcup_{v \in V} L_v$. 
\begin{align}
\max\quad & \sum_{e=(u,u')\in E} \sum_{\ell<\ell'} w_{e}y_{uu'}(\ell,\ell') &\nonumber \\
 s.t.\quad & \sum_{\ell\in L} x_{u}(\ell)=1 & \forall\; u\in V\nonumber \\ 
\label{lp:forb} & x_{u}(\ell)=0 & \forall\;u\in V, \forall\; \ell\in L\sm L_u \\
\label{lp:sumy} & \sum_{\ell'\in L} y_{uu'}(\ell,\ell') = x_{u}(\ell) &  \forall\; u,u'\in V, \forall\; \ell\in L \\
\label{lp:sym} & y_{uu'}(\ell,\ell') = y_{u'u}(\ell',\ell) & \forall\; u,u'\in V, \forall\; \ell,\ell'\in L\\
 & x_{u}(\ell),\;y_{uu'}(\ell,\ell')\ge 0 & \forall\; u,u'\in V, \forall\; \ell,\ell'\in L\nonumber
\end{align}
In the above LP, variable $x_u(\ell)$ denotes whether a vertex $u$ has label 
$\ell$, and variable $y_{uu'}(\ell,\ell')$ denotes whether simultaneously
$u$ has label $\ell$ and $u'$ has label $\ell'$.
For the sake of presentation, we defined the variables $x_{u}(\ell)$ and $y_{uu'}(\ell,\ell')$ also for unfeasible label assignments. Constraint \eqref{lp:forb} guarantees that such variables are set to zero.

Consider the following natural randomized LP-rounding algorithm. Let $(x,y)$ be an optimal solution to $LP(D)$. 
Observe that for a fixed vertex $v$ variables $x_{u}(\ell)$, $\ell\in L_u$, define a probability distribution. 
We draw $\ell(u)$ from this distribution, independently for each $u\in V$.
Then $\ell(u)=\ell$ with probability $x_{u}(\ell)$. 
\begin{lemma}\label{lem:round}
The above algorithm computes a solution of expected cost at least $\frac{lp^2}{2W}$, where $lp$ is the value of the optimal \fab{fractional} solution to $LP(D)$.
\end{lemma}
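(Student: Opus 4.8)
The plan is to express the algorithm's expected profit per edge as an independent-rounding ``forward probability'' and to compare it, edge by edge, with the corresponding fractional contribution of the LP. Fix the optimal solution $(x,y)$ of $LP(D)$ and write $lp=\sum_{e=(u,u')}w_e\,p_e$, where $p_e=\sum_{\ell<\ell'}y_{uu'}(\ell,\ell')$. Since the rounding draws $\ell(u)$ independently with $\Pr[\ell(u)=\ell]=x_u(\ell)$, linearity of expectation gives that the expected profit equals $\sum_{e=(u,u')}w_e\,q_e$ with $q_e=\sum_{\ell<\ell'}x_u(\ell)x_{u'}(\ell')=\Pr[\ell(u)<\ell(u')]$. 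I would then reduce the whole statement to the single pointwise inequality $q_e\ge \tfrac12 p_e^2$: granting it, Cauchy--Schwarz applied to $lp=\sum_e \sqrt{w_e}\,(\sqrt{w_e}\,p_e)$ yields $lp^2\le(\sum_e w_e)(\sum_e w_e p_e^2)=W\sum_e w_e p_e^2\le 2W\sum_e w_e q_e$, i.e.\ the expected profit is at least $lp^2/(2W)$.

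The crux is therefore the per-edge bound, whose heart is controlling how much larger the coupling $y_{uu'}$ can make the forward mass $p_e$ compared with the product distribution. Fix an edge $e=(u,u')$ and abbreviate $a=x_u$, $b=x_{u'}$; by \eqref{lp:sumy} and the symmetry \eqref{lp:sym}, $y_{uu'}$ is a coupling of $a$ and $b$, i.e.\ its marginals are exactly $a$ and $b$. Let $(X,Y)$ be distributed according to $y_{uu'}$, so $p_e=\Pr[X<Y]$, and set $\alpha(t)=\sum_{\ell\le t}a(\ell)$ and $\bar B(t)=\sum_{\ell>t}b(\ell)$. The key observation is a threshold bound: for every integer $t$, splitting the event $\{X<Y\}$ according to whether $X\le t-1$ or $X\ge t$ gives $p_e\le \alpha(t-1)+\bar B(t)$, because in the first case the tail label alone forces $X\le t-1$, while in the second case $Y>X\ge t$ forces $Y>t$. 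Minimizing over $t$ yields $p_e\le P$, where $P=\min_t\bigl(\alpha(t-1)+\bar B(t)\bigr)$.

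It remains to show $q_e\ge P^2/2$, which is where the discrete sum must be compared with a continuous quantity. Writing $q_e=\sum_\ell a(\ell)\,\bar B(\ell)$ and using $\bar B(\ell)\ge (P-\alpha(\ell-1))^+$ (from $P\le\alpha(\ell-1)+\bar B(\ell)$ together with $\bar B\ge0$), I would view the resulting sum $\sum_\ell a(\ell)\,(P-\alpha(\ell-1))^+$ as a left Riemann sum of the nonincreasing function $f(s)=(P-s)^+$ over the partition of $[0,1]$ determined by the cumulative values $\alpha(\cdot)$ of $a$. Since $f$ is nonincreasing, this left sum dominates $\int_0^1 f(s)\,ds=\int_0^P(P-s)\,ds=P^2/2$, giving $q_e\ge P^2/2\ge p_e^2/2$ and closing the argument. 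I expect the main obstacle to be exactly this step: isolating the correct threshold inequality for $p_e$ and then packaging the telescoping cumulative sums so that they are provably bounded below by the integral $P^2/2$; the Cauchy--Schwarz aggregation and the independence computation are routine by comparison.
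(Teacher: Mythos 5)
Your proof is correct, but the heart of it --- the per-edge inequality (in your notation) $q_e\ge \tfrac12 p_e^2$ comparing the independent-rounding forward probability $q_e$ with the LP forward mass $p_e$ --- is established by a genuinely different argument from the paper's. (Note your $p_e,q_e$ are swapped relative to the paper's.) The paper isolates exactly this inequality as a standalone combinatorial fact (Lemma~\ref{lem:matrix}): for a nonnegative matrix with row sums $r_i$ and column sums $c_j$ one has $\sum_{i<j}r_ic_j\ge\tfrac12\bigl(\sum_{i<j}a_{ij}\bigr)^2$, proved in a few lines by symmetrizing the squared sum over pairs of index pairs and using the implication $(i\le i'\wedge i'<j')\Rightarrow i<j'$ to drop constraints. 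You instead treat $y_{uu'}$ as a coupling of the marginals $x_u$ and $x_{u'}$ (which indeed follows from \eqref{lp:sumy} together with the symmetry \eqref{lp:sym}), bound its forward mass by the threshold quantity $P=\min_t\bigl(\alpha(t-1)+\bar B(t)\bigr)$, which upper-bounds $\Pr[X<Y]$ for \emph{any} coupling of the two marginals, and then lower-bound the product measure's forward mass by $P^2/2$ via a left-Riemann-sum comparison with $\int_0^1(P-s)^+\,ds$. Your route is longer but more structural: it yields the slightly stronger chain $q_e\ge P^2/2$ with $P\ge p_e$, and it makes visible where the factor $\tfrac12$ comes from; the paper's route is shorter and purely algebraic. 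The aggregation steps are equivalent (your Cauchy--Schwarz versus the paper's Jensen applied to $x\mapsto x^2$ with weights $w_e/W$). Two small points you should make explicit in a write-up: $P\le 1$ (take $t>\max L$), which is needed for $\int_0^1(P-s)^+\,ds=P^2/2$; and the identification $\alpha(\ell-1)=\sum_{\ell''<\ell}x_u(\ell'')$ as the left endpoint of the cell of width $x_u(\ell)$ in the partition of $[0,1]$, which is what makes the sum a genuine left Riemann sum of the nonincreasing function $(P-s)^+$.
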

In order to prove the above lemma, we need the following technical result.
\begin{lemma}\label{lem:matrix}
Let $A=[a_{ij}]$ be an $n\times n$ matrix with $a_{ij}\in \preal$. Let $r_i=\sum_{j}a_{ij}$ and $c_j = \sum_{\tom{i}} a_{ij}$ be the sum of entries
in the $i$-th row and $j$-th column, respectively.\fabr{Shouldn't we ask A to be symmetric in the claim???} Then
$$\sum_{i<j} r_ic_j \ge \frac{1}{2}\bigg(\sum_{i<j}a_{ij}\bigg)^2.$$
\end{lemma}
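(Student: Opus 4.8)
The plan is to prove the inequality \emph{without} assuming that $A$ is symmetric (which settles the margin remark: symmetry is not needed), by first reducing to the strictly upper-triangular case and then using a swapping involution to capture the factor $\tfrac12$. First I would observe that every entry $a_{ij}$ with $i\ge j$ contributes to the row sum $r_i$ and the column sum $c_j$ with a nonnegative coefficient, yet does not appear in $S:=\sum_{i<j}a_{ij}$. Since all entries lie in $\preal$, replacing $A$ by its strictly upper-triangular part $\tilde a_{ij}=a_{ij}$ for $i<j$ and $\tilde a_{ij}=0$ for $i\ge j$ can only decrease each $r_i$ and each $c_j$, hence only decreases $\sum_{i<j}r_ic_j$, while leaving $S$ unchanged. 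Thus it suffices to prove the claim when $a_{ij}=0$ for all $i\ge j$, and in particular no symmetry hypothesis enters.

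For such a matrix $r_i=\sum_{p>i}a_{ip}$ and $c_j=\sum_{q<j}a_{qj}$, so expanding yields $\sum_{i<j}r_ic_j=\sum_{i<j}\sum_{p>i}\sum_{q<j}a_{ip}a_{qj}$. I would read each summand as an ordered pair of upper-triangular entries $(a_{ip},a_{qj})$, with $i<p$ and $q<j$, subject to the single linking constraint $i<j$, that is, the row index of the first factor is smaller than the column index of the second. On the other hand $S^2=\big(\sum_{i<j}a_{ij}\big)^2$ is precisely the sum of $a_{ip}a_{qj}$ over \emph{all} ordered pairs of upper-triangular entries, with no linking constraint. Hence $S^2$ splits as $\sum_{i<j}r_ic_j$ plus the contribution of the ``bad'' ordered pairs, namely those with $i\ge j$.

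The heart of the argument is to bound the bad contribution by the good one, and for this I would use the involution swapping the two factors, $(a_{ip},a_{qj})\mapsto(a_{qj},a_{ip})$. If a pair is bad then $q<j\le i<p$, so after the swap the linking constraint ``row index of first $<$ column index of second'' becomes $q<p$, which holds; thus every bad pair is sent to a good one, the map is injective (being an involution), and it preserves the product value. Consequently $\sum_{\text{bad}}a_{ip}a_{qj}\le\sum_{i<j}r_ic_j$, whence $S^2\le 2\sum_{i<j}r_ic_j$, which is exactly the claim. The step I expect to require the most care is precisely this constant $\tfrac12$: a crude term-by-term comparison does not deliver it, and one genuinely needs the pairing, together with the chain $q<j\le i<p$ that certifies the swapped pair is admissible.
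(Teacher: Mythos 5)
Your proof is correct and is essentially the paper's argument in a different guise: both expand $\left(\sum_{i<j}a_{ij}\right)^2$ over ordered pairs of upper-triangular entries and use the swap involution to show that each pair or its swap satisfies the linking constraint ``row index of the first factor $<$ column index of the second,'' which is exactly the paper's implication $(i\le i' \wedge i'<j')\Rightarrow i<j'$ read contrapositively. Your preliminary truncation to the strictly upper-triangular part is handled implicitly in the paper by dropping Iverson brackets, and, like the paper's proof, yours confirms that no symmetry assumption on $A$ is needed.
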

\begin{proof}
We use Iverson notation: $[\phi]$ is $1$ if $\phi$ is satisfied, and $0$ otherwise.
By symmetry \tom{between $(i,j)$ and $(i',j')$} we have
\begin{equation}\label{eq:1}\Bigg(\sum_{i<j}a_{ij}\Bigg)^2=\sum_{i,i',j,j'}[i<j][i'<j']a_{ij}a_{i'j'}\le
2\sum_{i,i',j,j'}[i<j][i'<j'][i\le i']a_{ij}a_{i'j'}.\end{equation}
Clearly $(i\le i' \wedge i'<j') \Rightarrow i<j'$, and consequently
\begin{align}
\sum_{i,i',j,j'}[i<j][i'<j'][i\le i']a_{ij}a_{i'j'} & \le \sum_{i,i',j,j'}[i'<j'][i\le i']a_{ij}a_{i'j'}\le 
\sum_{i,i',j,j'}[i<j']a_{ij}a_{i'j'}\nonumber \\
& = \sum_{i<j'}\bigg(\sum_{j}a_{ij}\bigg)\bigg(\sum_{i'}a_{i'j'}\bigg)
=\sum_{i<j'}r_ic_{j'}\label{eq:2}.
\end{align}
The claim follows by combining \eqref{eq:1} and \eqref{eq:2}.
\end{proof}
\begin{proof} {\em (of Lemma \ref{lem:round})}
For an edge $e=(u,u')\in E$\fab{,} define $p_e = \sum_{\ell<\ell'} x_{u}(\ell)x_{u'}(\ell')$ and $q_e = \sum_{\ell<\ell'} y_{u,u'}(\ell,\ell')$. Note that the expected profit from $e$ equals $p_ew_e$, while the profit of the LP solution for the same edge is $q_ew_e$. In particular, $lp=\sum_{e\in E}q_ew_e$.

For each $e=(u,u')$ we apply Lemma~\ref{lem:matrix} to the $|L|\times |L|$ matrix $A$
with $a_{ij}=y_{uu'}(i,j)$.
By \eqref{lp:sumy} the sum $r_i$ of the entries in the $i$-th row is equal to $x_{u}(i)$.
Moreover, combining \eqref{lp:sumy} and \eqref{lp:sym}, \fab{one has} that the sum $c_j$ of the entries in 
the $j$-th column is $x_{u'}(j)$.
We conclude that 
\begin{equation*}
p_e = \sum_{i<j} x_{u}(i)x_{u'}(j) = \sum_{i<j} r_ic_j \overset{\text{Lem.\ref{lem:matrix}}}{\geq} \tfrac{1}{2}\bigg(\sum_{i<j}a_{i,j}\bigg)^2=\tfrac{1}{2}\bigg(\sum_{i<j}y_{uu'}(i,j)\bigg)^2 = \tfrac{1}{2}q_e^2.
\end{equation*}
As function $f(x)=x^2$ is convex, by Jensen's inequality with coefficients $\frac{w_e}{W}$
we obtain that the expected profit $\sum_{e}w_ep_e$ of the approximate solution satisfies:
\begin{equation*}
\sum_{e}w_ep_e \ge \tfrac{W}{2}\sum_e \tfrac{w_e}{W}q_e^2  \ge \tfrac{W}{2}\bigg(\sum_e \tfrac{w_e}{W}q_e\bigg)^2
= \frac{lp^2}{2W}.\qedhere
\end{equation*}
\end{proof}

\subsection{Derandomization and Conclusions}
\label{ssec:conclusions}

We start by observing that both the mentioned algorithms can be easily derandomized by using the method of conditional expectations. We next shortly describe how to do that. 
\begin{lemma}\label{lem:quarter_der}
The algorithm from Lemma \ref{lem:quarter} can be derandomized.
\end{lemma}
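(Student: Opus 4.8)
The plan is to apply the method of conditional expectations, fixing the labels of the vertices one at a time. I fix an arbitrary order $v_1, \ldots, v_n$ of the vertices and process them in this order; after processing $v_i$ the labels $\ell(v_1), \ldots, \ell(v_i)$ are determined (each equal to either $\ell_{v_j,\min}$ or $\ell_{v_j,\max}$), while $\ell(v_{i+1}), \ldots, \ell(v_n)$ remain random, drawn independently as in Lemma~\ref{lem:quarter}. The invariant I maintain is that the conditional expectation of the profit, given the labels fixed so far, is at least $\frac{W}{4}$.

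First I would observe that these conditional expectations are computable in polynomial time. By linearity of expectation, the conditional expected profit equals $\sum_{e} w_e \cdot \Pr[\ell(u)<\ell(u')]$, where for $e=(u,u')$ the probability is taken over the random choices of the still-unfixed vertices and depends only on the two endpoints of $e$. If both endpoints are already fixed, the probability is $0$ or $1$; if exactly one is fixed, it is the fraction of the two equiprobable values of the free endpoint that yield $\ell(u)<\ell(u')$; and if both are free, it is the fraction among the four equiprobable combinations. Each case reduces to a constant number of integer comparisons, so the whole expectation is evaluated in polynomial time.

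The core step is the greedy choice at vertex $v_{i+1}$. I compute the conditional expectation under each of the two candidate labels $\ell_{v_{i+1},\min}$ and $\ell_{v_{i+1},\max}$, and assign to $v_{i+1}$ whichever label attains the larger value. Because in the original process $v_{i+1}$ takes each of these two values with probability $\frac{1}{2}$, the conditional expectation before fixing $v_{i+1}$ is exactly the average of the two values computed afterward; hence their maximum is at least the previous conditional expectation, and the invariant is preserved.

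Finally, once all vertices are fixed the conditional expectation coincides with the deterministic profit of the produced labeling. Starting from the unconditioned expectation, which is at least $\frac{W}{4}$ by Lemma~\ref{lem:quarter}, the maintained invariant forces the final profit to be at least $\frac{W}{4}$ as well. I expect no real obstacle here: the only point requiring care is the polynomial-time evaluation of the conditional expectations, and this is routine precisely because each edge contributes independently and its contribution depends on just its two endpoints.
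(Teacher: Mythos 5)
Your proposal is correct and follows essentially the same route as the paper: the method of conditional expectations over an arbitrary vertex order, choosing at each step the label in $\{\ell_{v_i,\min},\ell_{v_i,\max}\}$ with the larger conditional expectation, which is at least the average and hence preserves the invariant. The only difference is that you spell out the polynomial-time computability of the conditional expectations edge by edge, which the paper merely asserts.
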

\begin{proof}
Let $Z$ be a random variable equal to the value of the integer solution computed by the randomized algorithm. We consider nodes in an arbitrary order $v_1,v_2,\ldots,v_{|V|}$. At each iteration $i=1,\ldots,|V|$, we have already fixed labels $\ell_j$ for each node $v_j$, $j<i$, such that the invariant  $\E[Z\vert \ell(v_j)=\ell_j, j=1,\ldots,i-1]\geq \E[Z]$ holds. In the considered iteration we fix the label $\ell_i$ for node $v_i$ as follows. We compute the two quantities $\E[Z\vert \ell(v_j)=\ell_j, j=1,\ldots,i-1\text{ and }\ell(v_i)=\ell_{v_i,\min}]$ and $\E[Z\vert \ell(v_j)=\ell_j, j=1,\ldots,i-1\text{ and }\ell(v_i)=\ell_{v_i,\max}]$.
Note that these quantities can be easily computed in polynomial time. Observe also that at least one of the two quantities is lower bounded by $\E[Z\vert \ell(v_j)=\ell_j, j=1,\ldots,i-1]$, hence by $\E[Z]$ because of the invariant. We set $\ell_i$ to the label in $\{\ell_{v_i,\min},\ell_{v_i,\max}\}$ that achieves the larger conditional expectation. It follows that the resulting deterministic algorithm computes a solution of profit at least $\E[Z]\geq \frac{W}{4}$.
\end{proof}
\begin{lemma}\label{lem:derand}
The algorithm from Lemma~\ref{lem:round} can be derandomized.
\end{lemma}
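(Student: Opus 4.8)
The plan is to mimic the conditional-expectation derandomization of Lemma~\ref{lem:quarter_der}, the only essential difference being that a node now typically has many admissible labels rather than just the two values $\ell_{v,\min}$ and $\ell_{v,\max}$. First I would let $Z$ denote the (random) value of the integer solution produced by the rounding of Lemma~\ref{lem:round}, so that $\E[Z]\ge \frac{lp^2}{2W}$. I would then fix an arbitrary order $v_1,\dots,v_{|V|}$ of the nodes and process them one at a time, maintaining the invariant that, after committing labels $\ell_1,\dots,\ell_{i-1}$ to $v_1,\dots,v_{i-1}$, one has
$$\E\big[Z\mid \ell(v_j)=\ell_j,\ j<i\big]\ge \E[Z].$$

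The key point to verify is that these conditional expectations are computable in polynomial time. By linearity of expectation, $\E[Z\mid \ell(v_j)=\ell_j,\ j<i]=\sum_{e} w_e\,\Pr[e\text{ satisfied}\mid \ell(v_j)=\ell_j,\ j<i]$, and for each edge $e=(u,u')$ the satisfaction probability splits into three easy cases depending on how many endpoints are already fixed. If both $u$ and $u'$ are fixed, the contribution is the deterministic indicator $[\ell(u)<\ell(u')]$; if exactly one endpoint is fixed, say $u$ to $\ell_u$, while $u'$ is still free, the contribution is $\sum_{\ell'>\ell_u}x_{u'}(\ell')$ (and symmetrically when $u'$ is the fixed endpoint); and if both are free the contribution is $\sum_{\ell<\ell'}x_u(\ell)x_{u'}(\ell')=p_e$. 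Each of these sums ranges over $L$, so it is evaluated in polynomial time, and summing over the polynomially many edges yields the conditional expectation. This is where the hypothesis that the lists $L_v$ are given explicitly is used, since it guarantees $|L|$ is polynomial.

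To fix the label of $v_i$, I would compute, for every candidate $\ell\in L_{v_i}$, the quantity $\E[Z\mid \ell(v_j)=\ell_j,\ j<i,\ \ell(v_i)=\ell]$ and assign to $v_i$ the maximizing label. Conditioning on the distribution $x_{v_i}(\cdot)$ of $\ell(v_i)$ recovers the previous conditional expectation, i.e.
$$\E\big[Z\mid \ell(v_j)=\ell_j,\ j<i\big]=\sum_{\ell\in L_{v_i}} x_{v_i}(\ell)\,\E\big[Z\mid \ell(v_j)=\ell_j,\ j<i,\ \ell(v_i)=\ell\big],$$
so the maximum over $\ell$ is at least this weighted average and the invariant is preserved. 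Once all nodes are processed, every label is committed, the conditional expectation equals the value of the resulting deterministic solution, and that value is therefore at least $\E[Z]\ge \frac{lp^2}{2W}$. I do not expect a genuine obstacle here beyond careful bookkeeping: the whole argument rests on the fact that $Z$ decomposes linearly over edges and that the marginals $x_{v}(\cdot)$ remain available for the not-yet-fixed nodes, both of which make the per-step conditional expectations efficiently computable.
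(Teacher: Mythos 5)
Your proposal is correct and follows essentially the same route as the paper: both derandomize by the method of conditional expectations, using the edge-wise decomposition $\E[Z]=\sum_e w_e p_e$ with $p_e=\sum_{\ell<\ell'}x_u(\ell)x_{u'}(\ell')$ and the fact that fixing a label amounts to replacing $x_{v_i}(\cdot)$ by a point mass, which keeps the conditional expectations polynomial-time computable. Your three-case analysis of each edge's contribution is just a more explicit spelling-out of the paper's remark that one sets $x_{v_1}(\ell)=1$ and $x_{v_1}(\ell')=0$ for $\ell'\neq\ell$ and re-evaluates the same formula.
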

\begin{proof}
We use the method of conditional expectation similarly to the proof of Lemma \ref{lem:quarter_der}. Let $Z$ be a random variable equal to the value of the solution computed by the randomized algorithm. Recall from the proof of Lemma~\ref{lem:round} that
\begin{equation}\label{eq:prob}
\E[Z] = \sum_{e \in E} p_ew_e, \quad
\text{where }\quad p_e = \sum_{\ell<\ell'} x_{u}(\ell)x_{u'}(\ell').
\end{equation}
Let us choose some vertex $v_1$.
Observe that for some $\ell_1\in L_{v_1}$ it must be $\E[Z\,|\,l(v_1)=\ell_1] \ge \E[Z]$.
To compute $\E[Z\,|\,\ell(v_1)=\ell]$ we can set $x_{v_1}(\ell) = 1$ and
$x_{v_1}(\ell') = 0$ for $\ell' \neq \ell$ and use \eqref{eq:prob}.
Therefore such $\ell_1$ may be computed in polynomial time. 

We fix $\ell(v_1) = \ell_1$ and repeat this procedure on the remaining nodes considered in any order $v_2,\ldots,v_{|V|}$ until a label $\ell_i$ is chosen for each node $v_i$. The conditional expected value never decreases
so the value of the resulting solution is at least $\E[Z]\geq \frac{lp^2}{2W}$.
\end{proof}

We now have all the ingredients to prove the main result in this paper.
\begin{theorem}
RMAS admits a deterministic $2\sqrt{2}$-approximation algorithm.
\end{theorem}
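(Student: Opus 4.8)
The plan is to run both derandomized algorithms---the one from Lemma~\ref{lem:quarter_der} and the one from Lemma~\ref{lem:derand}---and return the better of the two solutions. Writing $OPT$ for the value of an optimal (integral) labeling, I would first record two elementary bounds. Since $LP(D)$ is a relaxation of $D$ (any integral labeling yields a feasible LP solution of the same value), its optimum $lp$ satisfies $OPT \le lp$. Moreover $lp \le W$: by constraints \eqref{lp:sumy} each $q_e = \sum_{\ell<\ell'} y_{uu'}(\ell,\ell') \le \sum_{\ell,\ell'} y_{uu'}(\ell,\ell') = \sum_{\ell} x_{u}(\ell) = 1$, hence $lp = \sum_{e} q_e w_e \le \sum_{e} w_e = W$.

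Next I would balance the two guarantees. After derandomization the two algorithms produce, deterministically, solutions of value at least $\tfrac{W}{4}$ and at least $\tfrac{lp^2}{2W}$ respectively, so the returned solution has value at least $\max\!\left(\tfrac{W}{4},\tfrac{lp^2}{2W}\right)$. Setting $t = lp/W \in (0,1]$ and factoring out $lp$, this quantity equals $lp\cdot\max\!\left(\tfrac{1}{4t},\tfrac{t}{2}\right)$. The first term decreases in $t$ while the second increases, so their maximum is minimized exactly where they coincide, namely at $t = 1/\sqrt{2}$, where both equal $\tfrac{1}{2\sqrt{2}}$.

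Consequently the returned solution has value at least $\tfrac{lp}{2\sqrt{2}} \ge \tfrac{OPT}{2\sqrt{2}}$, which gives the claimed factor. The conceptual core is not any single estimate but the observation that the two algorithms are complementary: the trivial algorithm is strong precisely when $lp$ sits far below $W$, whereas the LP-rounding algorithm is strong when $lp$ is close to $W$. There is no serious obstacle beyond this balancing; the only point needing care is verifying that the crossover $t = 1/\sqrt{2}$ falls inside the admissible range $(0,1]$, which it does, so no boundary case weakens the bound.
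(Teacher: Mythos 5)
Your proposal is correct and follows essentially the same route as the paper: run the two derandomized algorithms from Lemmas~\ref{lem:quarter_der} and~\ref{lem:derand}, return the better solution, use $opt\le lp\le W$, and balance $\frac{W}{4}$ against $\frac{lp^2}{2W}$ at the crossover ratio $\sqrt{2}$. The only (harmless) cosmetic differences are that you parameterize by $t=lp/W$ and compare against $lp$ before relaxing to $opt$, whereas the paper substitutes $opt$ for $lp$ first and parameterizes by $W/opt$, and that you additionally justify $lp\le W$, which the paper asserts without proof.
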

\begin{proof}
Let $opt$ be the value of the optimal solution, $opt\leq lp \leq W$. Consider the algorithm which returns the better solution among the ones computed by the algorithms from Lemmas \ref{lem:quarter_der} and \ref{lem:derand}. The profit of the constructed solution is bounded from below by
$$
\max\Big\{\tfrac{W}{4},\tfrac{lp^2}{2W}\Big\}\ge \max\Big\{\tfrac{W}{4},\tfrac{opt^2}{2W}\Big\}=opt\cdot \max\Big\{\tfrac{W}{4opt},\tfrac{opt}{2W}\Big\}.
$$
The worst-case approximation factor is therefore $2\sqrt{2}$, which is achieved for $\frac{W}{opt}=\sqrt{2}$.
\end{proof}

\bibliographystyle{plain}
\bibliography{rmas}

\end{document}